\documentclass[letterpaper,twocolumn,10pt]{article}
\makeatletter

\if@twocolumn\else\input twocolumn.sty\fi
\usepackage{times}  

\usepackage[T1]{fontenc}
\usepackage[utf8]{inputenc}

\usepackage[kerning=false,spacing=false]{microtype} 
\usepackage{flushend} 

\usepackage{cite}               
\usepackage{url}                

\usepackage{xcolor}             
\usepackage[]{hyperref}         
\hypersetup{
  colorlinks,
  linkcolor={blue!60!black},
  citecolor={blue!60!black},
  urlcolor={blue!60!black}
}

\def\maketitle{\par
 \begingroup
   \renewcommand\thefootnote{\fnsymbol{footnote}}%
   \def\@makefnmark{\hbox to\z@{$\m@th^{\@thefnmark}$\hss}}%
    \long\def\@makefntext##1{\parindent 1em\noindent
            \hbox to1.8em{\hss$\m@th^{\@thefnmark}$}##1}%
   \if@twocolumn
     \twocolumn[\@maketitle]%
     \else \newpage
     \global\@topnum\z@
     \@maketitle \fi\@thanks
 \endgroup
 \setcounter{footnote}{0}%
 \let\maketitle\relax
 \let\@maketitle\relax
 \gdef\@thanks{}\gdef\@author{}\gdef\@title{}\let\thanks\relax}

\def\@maketitle{\newpage
 \vbox to 1.85in{
 \vspace*{\fill}
 \vskip 0.8em
 \begin{center}%
  {\Large\bf \@title \par}%
  \vskip 0.20in minus 0.10in
  {\large
   \lineskip .5em
   \begin{tabular}[t]{c}\@author
   \end{tabular}\par}%
 \end{center}%
 \par
 \vspace*{\fill}
 }
}

\def\abstract{\begin{center}%
{\large\bf \abstractname\vspace{-.5em}\vspace{\z@}}%
\end{center}}

\def\section{\@startsection {section}{1}{\z@}{-2.2ex plus-0.6ex minus
    -.2ex}{1.2ex plus.2ex}{\reset@font\large\bf}}

\def\subsection{\@startsection{subsection}{2}{\z@}{-2.0ex plus -0.5ex minus
   -.2ex}{1.0ex plus .2ex}{\reset@font\normalsize\bf}}

\def\subsubsection{\@startsection{subsubsection}{3}{\z@}{-2.0ex plus
 -0.5ex minus -.2ex}{1.0ex plus .2ex}{\reset@font\normalsize\it}}

\makeatother
\usepackage{amsmath,amssymb,amsthm,graphicx,booktabs,tabularx,array,tikz,float}
\usetikzlibrary{arrows.meta,positioning,calc}
\microtypesetup{spacing=false}
\newtheorem{proposition}{Proposition}
\definecolor{kfblue}{HTML}{245B83}
\definecolor{kfteal}{HTML}{246A60}
\definecolor{kfamber}{HTML}{8A601F}
\hypersetup{pdftitle={KedgeFlow: A Programmable Control Plane for Enterprise AI Agents},pdfauthor={Jun He; Deying Yu}}
\begin{document}
\raggedbottom
\title{\bf KedgeFlow: A Programmable Control Plane\\for Enterprise AI Agents}
\author{{\rm Jun He}\\OpenKedge.io \and {\rm Deying Yu}\\OpenKedge.io}
\maketitle
\begin{abstract}
Enterprise AI agents can execute altered versions of approved actions, concurrently reuse the same permission, or retry operations whose responses were lost. These failures arise when approval and resource execution lack a shared contract. KedgeFlow specifies language-independent contracts for trusted controllers to authorize state-changing operations and privileged gateways to enforce that authority independently of agent reasoning. Inspired by OpenFlow, it connects policy checks supported by evidence, signed permission for an exact action, a shared allowance that can be consumed only once, target-side conditions for applying the change, and retained outcome records. Under trusted issuance of exact-action authority, two conditional properties establish at most one successful claim per allowance and link every governed dispatch to a matching grant, claim, and subsequent dispatch authorization, even when the agent is compromised. A reference implementation exposes this execution path to agent systems through the Model Context Protocol (MCP). In the single-resource evaluation, 6,560 concurrent ledger claim attempts yielded exactly one successful claim for each of 82 allowances. Boundary tests rejected unauthorized and altered requests before consumption. A 32-call gateway race produced one committed mock effect, and a retry after a lost response produced no additional dispatch. Deliberately permissive policy nevertheless authorized an undesirable effect, exposing the limits of execution authorization. Evaluation uses three members of the etcd coordinated key-value store on one host, synthetic authentication and authorization inputs, and a mock target that rejects stale requests. Distributed reservations, enforcement of every production access route, native crash recovery, exactly-once effects at real targets, and interoperability between independent implementations remain untested.
\end{abstract}

\section{Introduction}
\label{sec:intro}
\begin{figure*}[t]
\centering
\begin{tikzpicture}[font=\fontsize{9}{10.5}\selectfont,>=Latex,
 card/.style={draw=black!70,rounded corners=2.5pt,fill=white,align=center,inner sep=5pt},
 action/.style={->,semithick},
 control/.style={->,very thick,color=kfblue!90!black},
 shared/.style={<->,very thick,color=kfteal!90!black},
 report/.style={->,densely dashed,semithick,color=black!70},
 query/.style={<->,densely dashed,semithick,color=black!70},
 lab/.style={font=\fontsize{8.5}{10}\selectfont,inner sep=1pt,align=center},
 header/.style={rounded corners=2pt,font=\small\bfseries,inner sep=3pt,align=center}]

\filldraw[draw=kfamber!70,fill=kfamber!6,rounded corners=4pt,thick] (0,-0.7) rectangle (3.95,7.95);
\filldraw[draw=kfblue!70,fill=kfblue!5,rounded corners=4pt,thick] (4.7,-0.7) rectangle (11.3,7.95);
\filldraw[draw=kfteal!80!black,fill=kfteal!6,rounded corners=4pt,thick] (12.2,-0.7) rectangle (17.2,7.95);
\node[header,fill=kfamber!25,draw=kfamber!80,text width=3.3cm] at (1.975,7.55) {Agent Plane\\\scriptsize Untrusted};
\node[header,fill=kfblue!20,draw=kfblue!80,text width=5.6cm] at (8,7.55) {Control Plane\\\scriptsize Trusted authority};
\node[header,fill=kfteal!25,draw=kfteal!90!black,text width=3.8cm] at (14.7,7.55) {Enforcement Plane\\\scriptsize Privileged boundary};

\node[card,text width=3.3cm,minimum height=0.9cm] (runtime) at (1.975,6.5) {\textbf{Reasoning Runtime}\\Untrusted context / LLM};
\node[card,text width=3.3cm,minimum height=1.25cm,draw=kfamber!90!black,thick] (agent) at (1.975,5) {\textbf{Agent Adapter}\\Freeze proposal $q$};
\node[card,text width=3.3cm,minimum height=1cm,dashed,draw=kfamber!80!black,fill=kfamber!10] at (1.975,3.2) {\textbf{Compromised Intent}\\Prompt injection / replay\\Untrusted proposals};
\node[card,text width=3.3cm,minimum height=0.9cm,fill=black!5,draw=black!50] at (1.975,1.4) {\textbf{Credential Isolation}\\No target mutation\\credentials};

\node[card,text width=5.6cm,minimum height=1.1cm,fill=kfblue!8] (policy) at (8,6.5) {\textbf{Policy Authority} / \textbf{Resource Registry}\\Policies, action descriptions, keys};
\node[card,text width=5.6cm,minimum height=1.25cm] (admission) at (8,5) {\textbf{Admission Controller}\\\textbf{Evidence Providers}\\Check policy against supporting evidence};
\node[card,text width=5.6cm,minimum height=0.95cm,draw=kfblue!90!black,thick] (grant) at (8,3.2) {\textbf{Grant Authority}\\Issue exact grant $G$ bound to root $\rho$};
\node[card,text width=4.25cm,minimum height=0.9cm] (outcomes) at (8,1.4) {\textbf{Outcome Service}\\Receipts / outcome correlation};
\node[card,text width=5.6cm,minimum height=0.9cm,fill=kfblue!12,draw=kfblue!90!black,line width=0.9pt] (state) at (8,0) {\textbf{Shared Authority Store} (etcd)\\Roots / versions / attempts};

\node[card,text width=3.8cm,minimum height=1.25cm,line width=1.1pt,draw=kfteal!90!black] (gateway) at (14.7,5) {\textbf{Execution Gateway}\\Authenticate; verify exact $G$\\Claim; authorize dispatch};
\node[card,text width=3.8cm,minimum height=1.1cm,draw=kfteal!80!black] (adapter) at (14.7,3.2) {\textbf{Resource Adapter}\\Fixed mapping / guards\\Target credential custody};
\node[card,text width=3.8cm,minimum height=0.95cm,fill=black!4] (target) at (14.7,1.4) {\textbf{Protected Target Resource}\\Declared guard / commit / query};

\draw[action] (runtime.south) -- (agent.north);
\draw[action,thick] ([yshift=0.3cm]agent.east) -- ([yshift=0.3cm]admission.west);
\node[lab] at (4.4,5.55) {Proposal};
\draw[action,thick] (grant.west) -- (4.4,3.2) -- (4.4,4.7) -- ([yshift=-0.3cm]agent.east);
\node[lab] at (4.4,4.93) {Grant $G$};
\draw[action] (policy.south) -- (admission.north);
\draw[action] (admission.south) -- (grant.north);
\node[lab,anchor=west] at (8.25,4) {Certificate};

\draw[control] (policy.east) -- (13.7,6.5) -- ([xshift=-1cm]gateway.north);
\node[lab,text=kfblue!90!black,font=\fontsize{9}{10.5}\selectfont\bfseries] at (14.4,6.85) {Enforcement Control};

\draw[action,very thick] (agent.west) -- (-0.18,5) -- (-0.18,8.45) -- (17.45,8.45) -- (17.45,5.3) -- ([yshift=0.3cm]gateway.east);
\node[lab,font=\fontsize{9}{10.5}\selectfont\bfseries] at (8.6,8.7) {Execution: exact request + grant $G$};
\draw[action,thick,kfteal!90!black] (gateway.south) -- (adapter.north);
\node[lab,anchor=west,text=kfteal!90!black] at (14.95,4) {Verified call};
\draw[action,thick,kfteal!90!black] (adapter.south) -- (target.north);
\node[lab,anchor=west,text=kfteal!90!black] at (14.95,2.25) {Native call};

\draw[action,kfblue!80!black] (grant.south -| {5.5,0}) -- (state.north -| {5.5,0});
\node[lab,anchor=west,text=kfblue!90!black] at (5.7,0.7) {Install root $\rho$};
\draw[action,kfblue!80!black] (outcomes.south) -- (state.north);
\node[lab,anchor=west,text=kfblue!90!black] at (8.25,0.7) {Retain outcome};

\draw[report] ([yshift=-0.3cm]gateway.west) -- (11.7,4.7) -- (11.7,2.2) -- (9.6,2.2) -- ([xshift=1.6cm]outcomes.north);
\node[lab,anchor=east] at (10.95,2.47) {Receipt / outcome};
\draw[query] (outcomes.east) -- (target.west);
\node[lab] at (11.7,1.65) {Query};

\draw[shared] ([yshift=-0.3cm]gateway.east) -- (17.45,4.7) -- (17.45,-1.35) -- (8,-1.35) -- (state.south);
\node[lab,text=kfteal!90!black,font=\fontsize{9}{10.5}\selectfont\bfseries] at (12.7,-1.07) {Shared claim / dispatch authorization};

\draw[action] (0.4,-1.9) -- (0.9,-1.9);
\node[lab,anchor=west] at (1,-1.9) {Request / grant};
\draw[control] (4.8,-1.9) -- (5.3,-1.9);
\node[lab,anchor=west] at (5.4,-1.9) {Enforcement Control};
\draw[shared] (9.7,-1.9) -- (10.2,-1.9);
\node[lab,anchor=west] at (10.3,-1.9) {Shared state};
\draw[query] (13.4,-1.9) -- (13.9,-1.9);
\node[lab,anchor=west] at (14,-1.9) {Report / query};
\end{tikzpicture}
\caption{KedgeFlow's controller--gateway architecture. The agent submits proposal $q$ and presents signed grant $G$; the control plane owns admission and programs enforcement state. The gateway and adapter hold target credentials and enforce the exact request. Root $\rho$ identifies the shared single-use allowance. The thick Enforcement Control edge programs the gateway; the agent's Execution path consumes authority separately. The authority store retains permission-use state, while the Outcome Service correlates results with attempts. All nine logical roles are represented and may share processes. The reference implementation exercises the gateway, shared store, credential broker, and fixed adapter with a mock target that atomically checks its guard; admission and identity use scripted test components (fixtures), and outcome query is a separately tested library path.}
\label{fig:architecture}
\end{figure*}

Large language model (LLM) agents can automate infrastructure work, including software development~\cite{devin2024}, cloud administration, and database operations. Tool protocols such as the Model Context Protocol (MCP) connect agents to external services~\cite{mcp2024}. Some tools can change production state: for example, draining a cluster node removes it from active service, while changing a routing rule redirects traffic. Giving an agent credentials for these operations allows its reasoning to determine which changes reach the resource.

An infrastructure agent may base its plan on incident tickets, logs, or tool responses that contain untrusted text. In \emph{indirect prompt injection}, an attacker places instructions in that text to redirect the agent's behavior~\cite{greshake2023}. Input filtering and prompt guardrails can reduce this risk, but they do not independently authorize a resource operation~\cite{owaspprompt,owasptop10}. Caller authentication establishes who sent a request; it does not establish whether that request was approved or whether its parameters still match the approval. If the agent holds credentials that permit production changes, an authenticated but compromised agent can use them to perform harmful actions.

The gap between approval and execution appears in three failure modes. In \emph{parameter substitution}, an agent approved to drain a test node presents the valid permission with a production database node substituted as the target. In \emph{concurrent reuse}, two agent tasks present the same permission simultaneously; separate gateway checks can both succeed unless they share a record of its use. In \emph{lost-response retries}, a timeout leaves the agent unable to determine whether a cloud operation took effect. Repeating a non-idempotent operation---one that can produce an additional effect each time it runs---may then duplicate the change. These cases require checks on the exact request, coordination of permission use, and explicit handling of uncertain outcomes.

KedgeFlow specifies a shared contract for those responsibilities. \emph{Admission} checks a proposed action against policy and supporting evidence independently of the agent. Successful admission permits issuance of a signed \emph{grant}: permission for the specified action, target, and parameters. The grant refers to an \emph{execution root}, the identity of a single-use allowance shared by that grant and any replacement grants. A gateway \emph{claims} the allowance by atomically recording its consumption in shared state, so two concurrent requests cannot both claim it. The gateway then rechecks current authority and authorizes \emph{dispatch}, the act of sending the verified request to the target. Dispatch is distinct from \emph{commitment}, when the target applies the change. A target \emph{guard} is a condition that must hold for the change, such as an expected resource version. If the outcome is uncertain, the allowance remains consumed; losing a response does not create permission to retry.

OpenFlow separates the controller that programs network rules from the switches that enforce them~\cite{openflow151}. KedgeFlow applies this separation across three groups of responsibilities, or \emph{planes}: the untrusted \textbf{Agent Plane} proposes actions; the trusted \textbf{Control Plane} owns admission, authority, and outcomes; and the privileged \textbf{Enforcement Plane} validates requests and applies resource-specific checks (Figure~\ref{fig:architecture}). Under trusted issuance, coordinated authority state, and enforcement on every protected route, a compromised agent cannot change approved parameters or claim an allowance twice. Target guarantees depend on how the declared guard is enforced. Existing policy engines such as Cedar~\cite{cedarauth2026} and credential-exchange mechanisms such as RFC~8693~\cite{rfc8693} can supply components; KedgeFlow specifies how their decisions and state remain consistent across approval and execution.

This paper makes three contributions:
\begin{enumerate}
\item \textbf{A programmable controller--gateway contract.} Nine logical roles and six interfaces assign responsibility for admission, grants, shared allowance, resource enforcement, and outcomes. The contract specifies controller-installed state and gateway feedback (Sections~\ref{sec:architecture}--\ref{sec:interfaces}).
\item \textbf{Two conditional safety properties.} Under explicit trust and shared-state assumptions, at most one attempt claims each allowance, and each governed dispatch matches its grant, claim, and current authorization. These properties distinguish execution authority from target commitment and policy correctness (Section~\ref{sec:execution}).
\item \textbf{A reference protocol implementation for agent integration.} An MCP-facing gateway accepts agent tool calls with signed grants while holding target credentials at the enforcement boundary. It implements exact-request validation, shared single-use claims, credential exchange, guarded dispatch, and outcome retention for a single-resource path. Official MCP client integration, concurrency tests, and adversarial tests validate these mechanisms and their limits (Section~\ref{sec:evaluation}).
\end{enumerate}

Section~\ref{sec:architecture} defines the threat model and architecture. Section~\ref{sec:interfaces} specifies the interfaces and candidate message formats. Section~\ref{sec:execution} develops the two conditional properties and the treatment of uncertain outcomes. Section~\ref{sec:evaluation} reports the reference evaluation, and Section~\ref{sec:related} discusses deployment limits and related work.

\section{Threat Model and Architecture}
\label{sec:architecture}
\subsection{Compromised intent and trusted authority}
The attacker controls text in retrieved runbooks, documentation, tickets, or tool responses and may induce any agent-generated consequential proposal through prompt injection~\cite{greshake2023,owaspprompt}. We conservatively allow the attacker to control the Agent Adapter's proposals, parameters, request identifiers, and presented grants, and to replay previously obtained authority. An authenticated agent can therefore remain malicious: authenticated workload identity identifies the calling process, but does not establish trustworthy intent. Agent-supplied approvals (such as \texttt{approved=true} in an LLM tool response) do not constitute independent authorization. In a qualifying enterprise deployment, the agent runtime holds no credentials for directly changing the protected target. The adversarial experiments use a scripted test component that deterministically emits the redirected request; they measure the execution boundary, not an LLM attack-success rate. KedgeFlow confines the effect of compromised intent under stated assumptions rather than claiming to eliminate prompt injection.

The \emph{trusted computing base} comprises the components whose integrity the guarantees require: policy administration, accepted issuer keys, admission and grant issuance, the coordinated authority store, gateway, credential broker, resource mapping, and the relevant target-side checks and mutation operations. The credential broker exchanges caller credentials for credentials restricted to the target. The attacker cannot modify these components, steal their credentials, roll back retained authority, or bypass an enforced resource boundary. Evidence sources are trusted only for their explicitly accepted statements; a signed source can still be wrong. Admission must distinguish untrusted text from independently authenticated approval. Model compromise and control-plane compromise are different cases: compromise of an authorized issuer or policy administrator defeats the corresponding assumption. A permissive policy or compromised issuer can still authorize harmful effects.

\emph{Complete mediation} requires every route to the protected effect to enforce the boundary. Direct application programming interfaces (APIs), shell credentials, remote sessions, subagents, administrative clients, and inherited tokens must be inventoried and restricted. An agent-side hook or a gateway listening only on the local host does not close those routes. The local bypass experiment tests three specific target routes; operating-system credential isolation and an inventory of all clients able to modify production resources remain deployment assumptions.

\subsection{Three planes and nine roles}
The components must preserve a consistent authorization decision across their handoffs. A policy decision, a signed grant, a gateway check, and a target guard each establish a local property. Their integration must preserve the relationships between those properties: who admitted the final request, which allowance it consumes, which target conditions apply, and what is known about its effect. Correct local components do not establish those relationships when their ownership, identifiers, or state transitions disagree.

Consider two infrastructure agents approved to drain different nodes of a healthy three-node service that requires two available nodes. Each approval is reasonable against the initial state, but concurrent execution violates the availability requirement. This example identifies a coordination requirement for future multi-resource extensions; the present protocol and its safety results cover independently governed single-root actions, not globally serializable reservations across distinct execution roots. Exact-action binding prevents either agent from changing its approved target or expected version. A shared execution root prevents reuse of either agent's allowance, but does not coordinate the two distinct allowances. Shared resource reservations are needed to coordinate those competing operations. The target must check the availability condition atomically with the change to prevent a race between checking and applying it. Finally, \emph{outcome reconciliation} correlates an uncertain attempt with later observations, so an unconfirmed effect is not treated as absent before reserved capacity is released. The reference experiments demonstrate exact binding, atomic root consumption, and retained uncertainty for a single-resource path. They do not demonstrate distributed multi-resource reservations or a fleet-wide availability guarantee; this scenario motivates the broader architecture.

KedgeFlow specifies a programmable controller--gateway boundary for these relationships. Existing runtimes retain reasoning and workflow logic. Trusted controllers admit a frozen request, whose approved fields cannot change, and issue authority for that request. A separately privileged gateway validates the authority, claims its shared allowance, and dispatches through a resource adapter. The execution root identifies one allowance shared across successive grant versions, called \emph{generations}. Retrying a request or replacing an unclaimed grant does not create another allowance.

The \textbf{Agent Plane} owns reasoning, planning, and workflows. Its \emph{Agent Adapter} records the action and parameters of a protected tool call as a proposal and tracks that proposal while admission or execution is pending. The adapter may be a runtime hook, a software development kit (SDK) wrapper, a workflow step, or a protocol client. Its security role ends where the agent can alter its behavior; resource-side enforcement remains necessary.

The \textbf{Control Plane} owns policy, admission, authority, and outcomes. The \emph{Policy Authority} activates versioned rules and revocations. The \emph{Resource Registry} supplies action schemas, which define allowed actions and parameters; resource incarnations, which distinguish a recreated resource from an earlier resource with the same name; and adapter descriptors, which state the supported execution behavior. The \emph{Admission Controller} checks policy requirements against statements from \emph{Evidence Providers}. Providers may supply deterministic checks, telemetry, human review, or model judgments; an evidence statement does not itself authorize execution.

The \emph{Grant Authority} converts an admitted request into exact-action authority and preserves its allowance across replacement grants. The \emph{Outcome Service} retains \emph{execution lineage}, the links between a request, its grant, its claimed attempt, and its outcome, and correlates observations of uncertain attempts. Issuance, enforcement, and outcomes share authority state, but remain distinct responsibilities.

The \textbf{Enforcement Plane} contains the \emph{Execution Gateway} and \emph{Resource Adapter}. The gateway authenticates the caller, checks that the request matches its grant, claims an allowance, and obtains a target credential. The adapter translates the verified action into a specific target API call. It declares the supported guard, \emph{fencing} mechanism for rejecting requests with stale sequence numbers, commit point when the change takes effect, and query behavior for inspecting outcomes. Only this privileged boundary may use target mutation credentials within the protected scope.

Roles may share processes or span them. Roles in one process share the risk of that process being compromised. Existing infrastructure may supply these responsibilities, but the components must agree on execution behavior and preserve credential separation; correctly formatted messages alone are insufficient.

Across these three planes, the execution lifecycle for the node-drain scenario proceeds through three phases. During \emph{admission and issuance}, the Agent Adapter freezes the target node and expected resource version into proposal $q$. The registry supplies the action descriptor; admission checks independently scoped evidence; and the Grant Authority issues a signed grant $G$ bound to that request. These control-plane steps are represented by fixtures in the reference evaluation.

During \emph{validation and dispatch}, the gateway authenticates the caller and checks $G$ against the exact request. It then claims the shared execution root, exchanges for a target-only credential, and authorizes dispatch against current policy and ownership state, in that order. The Resource Adapter maps the verified action to a fixed native call. This sequence executes in the reference implementation and links the caller's request to the same consumed allowance.

During \emph{outcome retention}, the gateway attempts to persist the validated target result or the status \emph{Unknown}, meaning that it cannot establish whether the change took effect. The allowance remains consumed. If a response is lost, a subsequent query seeks an observation correlated with that attempt; it does not redispatch the operation or create fresh authority. The reconciliation library is tested separately with scripted observations, rather than a deployed target-query service.

\subsection{Descriptors state the supported guarantee}
A versioned \emph{Consequential Tool Descriptor} records the action and parameter schemas, resource identity rules, and adapter contract. It also records the consequence tier (the policy's impact classification), conflict domain (the resources over which actions may interfere), and compensation availability (whether a separately authorized action can reverse or offset the effect). Admission and issuance refer to the same descriptor version; an update cannot silently change an already authorized action.

Impact and enforcement strength are distinct: reads can disclose protected data, and idempotent writes---whose repetition adds no further effect---can still be consequential. Policy sets requirements; descriptors state the target mechanisms that can enforce them. An operation identifier alone does not make a condition check atomic with a mutation, and a shell tool may not restrict execution to a fixed set of actions. Tool discovery cannot lower policy requirements. A declared conflict domain identifies operations that admission may need to coordinate. A cryptographic \emph{digest}, a hash value identifying the exact descriptor, binds its interpretation to the grant so a substituted descriptor fails validation. Compensation metadata records supported recovery actions; it does not itself establish that reversal is safe or implemented.

\section{Programmable Interface Contract}
\label{sec:interfaces}
\subsection{Controller--gateway programming contract}
KedgeFlow specifies roles, message formats, validation rules, and observable state transitions independently of implementation language. Implementations may use different languages and internal components, provided they preserve the specified request bindings, canonical record bytes, authority coordination, and outcome semantics. The safety propositions depend on these properties, not on a particular language or storage product. The evaluated reference uses etcd for coordinated authority state; another store would need to provide the same atomicity and durability assumptions. These contracts support independent implementations, but interoperability between such implementations has not been demonstrated.

The \textbf{Enforcement Control Interface} defines how a trusted controller programs the gateway, analogous in architectural role to OpenFlow's controller-to-switch interface~\cite{openflow151}. The controller distributes policy versions; accepted issuer keys and revocations; resource and action descriptors; execution grants and active generations; and ownership, fencing, and supported protocol configurations. Policy and ownership \emph{epochs} are revision numbers used to reject stale authority. In return, the gateway reports supported actions and adapter guarantees, accepted configuration and authority versions, attempt identities with execution receipts, and observed outcomes or explicit uncertainty. A receipt records the reported result of an attempt. Installation must preserve signed constraints and return the accepted version and the scope it covers, or a rejection; stale enforcement state cannot silently authorize affected actions.

Accepted versions and capabilities return through Enforcement Control; attempt receipts, status, and uncertainty return through Execution and Outcome. These exchanges have distinct owners and purposes: controller programming changes accepted authority, while the agent's separate \textbf{Execution Interface} is used by agents to redeem authority for an exact request. The six interfaces are semantic contracts, not six required network protocols.

The reference implementation loads issuer keys and descriptors from operator startup configuration. Trusted policy-publication transactions install policy identity, digest, epoch, and ownership metadata; issuance installs a grant digest and unclaimed root; replacement advances that root's active generation and fence sequence number. These are shared-store library mechanisms, not a policy engine or a controller message protocol. Calls to etcd provide storage operations; they do not establish KedgeFlow interoperability between independent implementations. Signed grants arrive on Execution calls and are verified there. Execution discovery advertises supported message formats and header limits, and tool listing exposes action schemas; neither acknowledges controller installation nor reports the full adapter-guarantee contract. Dynamic installation, control acknowledgments, revocation-record exchange, signed gateway receipts, and remote status services remain future protocol requirements. Table~\ref{tab:interfaces} states the specified operations, reference implementation mechanisms, and outstanding work.

Each authority scope requires one coordinated owner. Separate issuer and gateway stores are insufficient even if each store orders its own operations consistently: issuance, replacement, and claim could succeed against different copies of a root. These transitions must share one atomic authority domain, in which their checks and updates have a common order, or use an equivalent coordination protocol. An etcd transaction orders authority metadata; it does not atomically commit a remote resource effect.

\begin{table*}[t]
\centering\footnotesize
\begin{tabularx}{\textwidth}{@{}p{2.35cm}>{\raggedright\arraybackslash}p{4.25cm}>{\raggedright\arraybackslash}p{5.1cm}>{\raggedright\arraybackslash}X@{}}
\toprule
\textbf{Contract operation} & \textbf{Required semantics} & \textbf{Reference implementation} & \textbf{Outstanding work}\\
\midrule
Governance: publish & Authorized versioned policy, descriptors, keys, ownership and revocations. & Startup keys/descriptors; atomic conditional policy/ownership updates. & Administration service, dynamic installation and per-grant revocation.\\
Runtime Admission: propose / grant & Freeze proposal; obtain certificate, denial or request for evidence; issue bounded authority. & Record/schema vectors, remediation and independent test-policy fixture. & Admission service, evidence evaluation and certificate-to-grant validation.\\
Enforcement Control: install & Authenticate programmed objects and expected version; preserve signed scope. & Static configuration; policy identity/digest/epoch and ownership updates over etcd. & Controller transport, dynamic keys/descriptors/profiles and revocation exchange.\\
Enforcement Control: activate authority & Coordinate exact grants, active root generations and fences in one authority domain. & Tested issuance/replacement transactions; installed digest/root state. Envelope verified on Execution. & Authenticated controller grant distribution and independent peer exchange.\\
Enforcement Control: feedback & Report accepted versions/coverage or rejection and supported adapter guarantees. & Library success to caller; Execution discovery and tool listing only. & Gateway control acknowledgments and full adapter-capability feedback.\\
Execution: execute / status & Authenticate exact request/grant; claim then authorize dispatch; return attempt/receipt. & Bounded JSON transport, grant transmission, gateway, fixed adapter, validated mock results; one SDK path. & Signed gateway receipts, remote status and full transport conformance.\\
Evidence: evaluate / observe & Scoped authenticated statements with provenance, validity and explicit uncertainty. & Schemas/signature vectors and scripted evidence inputs. & Provider service, producer-scope validation and admission semantics.\\
Outcome: report / query & Retain receipts and correlated observations; query never restores allowance or dispatches. & Receipt validation, completion recording, status lookup and reconciliation; scripted query tests. & Outcome service, authenticated native query and stranded-claim recovery.\\
\bottomrule
\end{tabularx}
\caption{Operations within the six specified semantic interfaces. Three rows expand the principal Enforcement Control boundary. Required semantics are candidate contracts; the implementation column identifies implemented mechanisms, and the final column identifies future wire/service requirements. Existing etcd transport is not a standardized KedgeFlow protocol or evidence of independent interoperability.}
\label{tab:interfaces}
\end{table*}

\subsection{Records and identities}
A proposal $q$ fixes the calling user or service (subject $u$), calling process identity (workload $w$), delegation, canonical action, exact parameters $\theta$, resource identity and incarnation $r$, and referenced context. Admission evaluates policy requirements $\Omega(q,s)$ for proposal $q$ in state $s$, under policy version $\Pi_v$. It accepts an evidence manifest $W$, the collection of supporting statements. With $H$ denoting a cryptographic hash, the admission certificate binds $H(q)$, the policy version, $H(W)$, satisfied requirements, validity, and admitted constraints $\kappa$. Missing evidence yields a request for further evidence and a violated requirement yields denial; neither issues execution authority.

The Grant Authority validates the certificate and issues $G$ with three groups of bindings, fields that tie authority to the approved request and its permitted use. Request bindings identify the proposal and certificate, subject and workload, action and parameters, intended gateway (the \emph{audience}), resource incarnation, descriptor, and guard. Authority bindings specify lifetime, policy, and ownership epoch. Consumption bindings identify the execution root, generation, nonce (a one-use marker checked during claim), fence, and allowance. Issuance may narrow approved constraints but cannot broaden them.

This is the architecture's issuance contract. The reference verifier authenticates grant fields and executable bindings; it does not re-evaluate admission or reconstruct the evidence chain from opaque certificate digests.

Identity has three scopes. The \emph{action identity} is the frozen proposal digest. The \emph{root identity} $\rho$ names a conserved execution allowance across replacement generations $\gamma$. The \emph{attempt identity} $t$ names the successful claim and its retained outcome. Grant identifiers distinguish signed permission messages. Changing a request identifier, grant transmission method, or generation cannot create another allowance for an already consumed root. Conversely, different roots are not automatically recognized as one logical business action: trusted admission and grant issuance must prevent duplicate allowances for that action.

\subsection{Candidate wire profile and conformance limits}
The candidate \emph{wire profile} defines the message format used to exchange records. Its JSON envelope, the outer structure surrounding a payload, permits only \texttt{profile}, \texttt{kind}, \texttt{schema}, \texttt{key\_id}, \texttt{payload}, and \texttt{signature}. The reference implementation applies restricted RFC~8785 \emph{canonicalization}, which gives a record a deterministic byte representation for signing and hashing~\cite{rfc8785}: UTF-16 property ordering, no Unicode normalization, safe-integer tokens, and rejection of duplicate keys, malformed Unicode, floating/exponent tokens, and trailing data. It refuses unknown envelope and grant fields. Passing signature test vectors for other record kinds does not establish their full payload validation or signer-role semantics.

Let $C(x)$ be these canonical bytes, $P$ the fixed prefix identifying the protocol's signing domain, and $\|$ byte concatenation. The Ed25519 signing input is $P\|0\|C(E^{-})$, where $E^{-}$ is envelope $E$ without its signature field and $0$ is one NUL byte. Signatures use unpadded base64url, a URL-safe binary-to-text encoding. Record digests are lowercase hexadecimal SHA-256 of $P\|\texttt{/kind}\|0\|C(x)$. Proposal identity uses $x=\{\texttt{schema},\texttt{payload}\}$; the grant digest $D(G)$ covers the complete signed envelope. Millisecond timestamps use an exclusive expiry boundary: authority is invalid at or after its expiry time. Signatures authenticate statements, not their truth or sufficiency.

The reference MCP transport~\cite{mcp2024} uses size-limited JSON POST requests for discovery, tool listing, and invocation. Client and gateway must explicitly negotiate how the grant is transmitted: \emph{inline}, in exactly one dedicated header containing the base64url envelope, or \emph{by reference}, using its 64-character full-envelope digest. In the latter case, an authenticated resolver at a fixed trusted endpoint returns the envelope. Mixing transmission methods, following redirects, using unsupported profiles, or failing resolution causes execution to be refused. Both methods consume the same ledger root. The implementation limits encoded inline headers to 6,144 bytes and resolved envelopes to 65,536 raw bytes. Exact wire identifiers, schemas, and test vectors are documented in the public artifact.

The evaluated node-drain action takes exactly two arguments: the target node and its expected resource version. The grant binds those parameters, resource incarnation, action schema, guard condition, and adapter contract. The adapter adds the fence, attempt ID, grant digest, and resource incarnation to the native request. A terminal mock receipt binds attempt, grant digest, resource incarnation, fence, status, and resulting resource version. Ordinary tool completion is insufficient evidence of a committed mutation.

\emph{Record conformance} checks canonicalization, signatures, schemas, and bindings. \emph{Enforcement-path conformance} adds credential isolation, shared claims, current authority, adapter semantics, and retained outcomes. \emph{Deployment qualification} adds complete mediation, policy/evidence ownership, writer inventory, real target semantics, and recovery. These explanatory scopes are not adopted certification levels. Tests support only a bounded record/enforcement slice; one SDK path does not establish independent peers or full MCP transport conformance.

\section{From Admission to Effects}
\label{sec:execution}
\subsection{Exact validation, claim, and dispatch}
The gateway verifies the signed grant against enrolled issuer keys, current time, and its audience. The independently authenticated subject, workload, proof-key thumbprint (the hash identifying the caller's proof key), and delegation must match the grant, as must the action, resource incarnation, parameters, and descriptor. A request cannot broaden permission by changing its natural-language description or presenting forged approval fields. The reference path supports delegated user-token verification and Demonstrating Proof of Possession (DPoP), which binds a token to a caller-held key~\cite{rfc9449}, plus mutual Transport Layer Security (mTLS) authentication of the calling workload. The adversarial harness substitutes explicit subject-authentication test components.

For the implemented single-resource path, issuance creates an unclaimed root with one active grant/generation. Replacement advances generation and fence only while that exact root remains unclaimed. Claim compares the active root, grant digest, unused nonce, policy and ownership epoch, then atomically writes a claimed root, nonce consumption, attempt, and dispatch-intent record. All participating gateways use the same etcd authority state~\cite{etcdguarantees2026}.

\begin{proposition}[Single-use authority in the bounded path]
For one execution-root allowance, assume issue, replacement, and claim compare and update the same durable root atomically in one authority domain, consumed state cannot be rolled back, and every protected dispatch requires a successful claim. At most one attempt can claim that root. Replacement grants share that same root allowance and retire earlier unclaimed generations without creating additional execution authority.
\end{proposition}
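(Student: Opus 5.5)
The argument models the root as a small state machine and shows that its serialized transitions preserve a simple invariant. By assumption, issue, replacement and claim are atomic compare-and-update transactions on one durable record in a single authority domain. That domain therefore imposes a total order on all successful transitions touching root $\rho$, so it suffices to reason about one linear history of committed transactions. The root state is the tuple (status $\in\{\textsf{unclaimed},\textsf{claimed}\}$, active generation $\gamma$, active grant digest, fence, nonce set, attempt $t$). Each transition has a precondition and a postcondition:
\begin{itemize}
\item \emph{issue} creates the record as unclaimed with one active generation;
\item \emph{replacement} requires status unclaimed and the current generation, and writes $\gamma+1$, a new digest and an advanced fence;
\item \emph{claim} requires status unclaimed, matching active digest, unused nonce and current epochs, and writes status claimed with a single attempt $t$.
\end{itemize}

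The first claim, that at most one attempt can claim $\rho$, follows from the invariant ``once status is claimed it remains claimed and the recorded attempt never changes.'' The proof is by induction over the serialized history. No transition's postcondition sets status back to unclaimed, since replacement and claim both require unclaimed as a precondition and neither rewrites a claimed record. The no-rollback assumption excludes any external restoration of earlier state.

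Now suppose two claims $c_1,c_2$ on $\rho$ both succeed. In the total order one precedes the other, say $c_1$. By the invariant, the state read by $c_2$ has status claimed, so $c_2$'s compare fails, a contradiction. Because every protected dispatch requires a successful claim, the number of dispatch-authorized attempts for $\rho$ is bounded by the number of successful claims, which is at most one.

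For the replacement clause, note first that no transition creates a second root record for $\rho$. Replacement rewrites the same key: it changes $\gamma$, the digest and the fence, but keeps the root identity. Hence all generations compete for the single status bit, and the argument above applies to them collectively. To show earlier generations are retired, take a grant of generation $\gamma'<\gamma$. Its digest differs from the active digest installed by the latest replacement, so any claim presenting it fails the digest comparison, and the fence check also rejects it. Replacement does not alter the status bit, so it creates no execution authority. Also, since replacement requires unclaimed, it can never follow a successful claim and so cannot reopen a consumed allowance.

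The main obstacle is being precise about what ``atomically in one authority domain'' buys. The linearization step assumes that every compare reads the latest committed state of the same record. It therefore excludes split stores, where issuer and gateway hold different copies of the root, and stale reads. I would state this explicitly as linearizability of the etcd transactions on that key, citing the store guarantee, and note that the proposition says nothing about duplicate roots minted for the same business action, which is an issuance-trust assumption.
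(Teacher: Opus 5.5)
Your proposal is correct and follows essentially the same argument as the paper. The single authority domain serializes the atomic compare-and-update transactions; claim and replacement both require the exact unclaimed root, so neither can succeed after a claim; a winning replacement makes older-generation claims fail the active digest/generation/fence comparison; and no-rollback excludes restoring unconsumed state---you simply make the invariant and the induction over the serialized history explicit.

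One small tightening: your inductive step checks only claim and replacement. You should also state that issue is a create-only compare, meaning the root must be absent, as the paper's test of 32 racing issuers with 31 refusals reflects. An issue allowed to overwrite an existing claimed root would reset it to unclaimed and break your invariant.
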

\begin{proof}
A claim requires the exact active root to be unclaimed and changes it to claimed atomically. Ordered transactions cannot satisfy that precondition twice. Replacement also requires the exact unclaimed root and advances its active generation. If replacement wins, the older claim fails; if claim wins, replacement fails. Restoring earlier unconsumed state invalidates the premise.
\end{proof}

Claim precedes credential exchange. RFC~8693 token exchange~\cite{rfc8693} supplies the reference broker pattern; action and attempt bindings are additional KedgeFlow inputs, not standard OAuth fields. The incoming caller token is never forwarded as the target token. 

After exchange, the gateway checks remaining validity, current authority, and dispatch authorization before invoking the adapter. Policy and ownership are compared at the atomic dispatch-authorization transaction; a subsequent revocation cannot recall an authorized in-flight request. That transaction and the network effect are not atomic together. An exchange failure, expired deadline, or refused dispatch does not refund consumption. A durable \emph{outbox} record retains dispatch intent and known status; its presence does not instruct the gateway to retry the target operation.

Proposition~1 establishes at-most-one successful claim per execution root allowance, not exactly-once mutation on the remote target. Two independently created roots for the same logical business action are not automatically deduplicated; preventing redundant root creation belongs to trusted admission and grant authority ownership. A late or duplicated network request requires target-side idempotency or fencing. Coordinating distinct roots that compete for shared capacity requires an explicit multi-resource reservation protocol, which is outside the single-root execution contract. Coordinated multi-step workflows with compensation, cross-cluster reservations, and global atomic commit across arbitrary targets are outside the implementation.

\subsection{Containment of compromised intent}
The second result connects each governed outbound request to one independently issued exact grant, one successful claim, and one subsequent dispatch authorization for the same attempt. It concerns authorization to initiate a request; target acceptance and commitment require separate conditions.

Let trace $T$ be an ordered sequence of execution events, with $c\prec a$ meaning that event $c$ precedes event $a$. Let $q$ be a frozen proposal, $G$ a signed grant, and $x$ an authenticated caller and its delegation. Define the binding tuple $b=(D(G),\rho,\gamma,t)$: full-envelope digest, authority-scoped root, grant generation, and attempt ID. It is notation, not a new wire record.

$\mathsf{Issued}(q,G)$ means trusted independent issuance binds $H(q)$ and its admitted constraints to $G$. $\mathsf{Exact}(G,q,x,d)$ requires matching subject/workload/proof key/delegation, audience, action/parameters, resource incarnation, and adapter contract for dispatch $d$. The proposal/admission link is an issuance premise; the reference gateway checks executable bindings, not the opaque admission chain.

$\mathsf{Claim}(c,b)$ denotes the atomic claim in Section~4.1, matching active grant ID/generation/fence, installed digest, unused nonce, policy ID/digest/epoch, and ownership epoch. It binds nonce and retained intent to $t$.

$\mathsf{Authorize}(a,b)$ rechecks the same claimed root/digest, nonce-to-$t$ binding, and current policy/ownership, then changes that attempt's intent from claimed-not-dispatched to dispatch-authorized.

$\mathsf{Dispatch}(d,b)$ is an outbound adapter request formed from that same verified call. Target acceptance, commitment, and gateway outcome observation are separate events; dispatch alone implies none of them.

\begin{proposition}[Conditional authorized-dispatch boundary]
Assume trusted independent issuance, authenticated caller/workload/delegation, unforgeable grants and collision-resistant digests, durable shared authority without rollback, and mediation of each governed path by the validation and transitions above. For every governed dispatch $d\in T$, there exist $q,G,x,b,c,a$ with valid signatures and lifetime at the documented gateway checks such that
\[
\begin{aligned}
 &\mathsf{Issued}(q,G)\land\mathsf{Exact}(G,q,x,d)\land\mathsf{Claim}(c,b)\\
 &\land\mathsf{Authorize}(a,b)\land\mathsf{Dispatch}(d,b)\land c\prec a\prec d.
\end{aligned}
\]
The claim and authorization belong to the same conserved execution attempt, even under attacker-controlled proposals and presented grants.
\end{proposition}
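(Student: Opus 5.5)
The argument is a backward trace argument: start from an arbitrary governed dispatch $d\in T$ and use the mediation assumption to walk back through the only state transitions that could have enabled it. By mediation, $d$ is emitted only by the Resource Adapter after the gateway has completed its fixed sequence: verify grant, claim, exchange, authorize, dispatch. So each premise-side object ($G$, $x$, $c$, $a$) is extracted from the gateway execution that produced $d$, and then shown to satisfy the corresponding predicate. The order $c\prec a\prec d$ follows from the sequencing of that execution. It also follows from the store's transition structure: the attempt intent for $t$ can enter dispatch-authorized only from claimed-not-dispatched, a state created only by $\mathsf{Claim}$.

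The key steps are as follows.

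\begin{enumerate}
\item \textbf{Fix the grant and caller.} Let $G$ be the envelope whose digest the adapter embeds in $d$; the adapter adds grant digest, attempt ID, fence, and incarnation to the native request. Let $x$ be the independently authenticated caller and delegation that the gateway accepted.
\item \textbf{Establish validity and exactness.} Signature verification against enrolled issuer keys, together with unforgeability, shows that $G$ was produced by a trusted issuer. The exclusive-expiry check gives a valid lifetime at the gateway checks. The field comparisons of Section~4.1 give $\mathsf{Exact}(G,q,x,d)$, where $q$ is the proposal whose digest $G$ binds.
\item \textbf{Obtain issuance.} Trusted independent issuance gives $\mathsf{Issued}(q,G)$. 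Attacker control over presented grants does not matter: any accepted $G$ is a genuine issued grant, and collision resistance prevents a different $q'$ from sharing $H(q)$.
\item \textbf{Fix the binding tuple.} Set $b=(D(G),\rho,\gamma,t)$, with $\rho,\gamma$ read from $G$'s consumption bindings and $t$ the attempt ID carried in $d$.
\item \textbf{Exhibit the claim and authorization.} The gateway reached dispatch, so an $\mathsf{Authorize}(a,b)$ transaction succeeded on $t$. Its compare step requires the claimed root and digest and the nonce-to-$t$ binding, which exist only if some $\mathsf{Claim}(c,b)$ committed. Durability without rollback ensures this record is the one written by $c$ and was not fabricated or restored.
\end{enumerate}

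The final clause, that claim and authorization belong to the same conserved attempt, follows by invoking Proposition~1: at most one claim on $\rho$ succeeds, so the claim witnessing $\mathsf{Authorize}$ is unique and is the one whose $t$ appears in $d$.

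The main obstacle is step 5: showing that $c$, $a$, and $d$ all refer to the same $b$, rather than to a claim by one attempt and an authorization or dispatch carrying another's grant. An attacker could replay or mix grants across concurrent calls to try to produce such a mismatch. To handle this I would argue per attempt ID. The authorize transaction compares the nonce-to-$t$ binding and digest written by the claim, so any $a$ for $t$ forces the same $D(G)$ and $\rho$. The adapter forms $d$ from the verified call, so it inherits that $t$ and $D(G)$; collision resistance then rules out a distinct envelope with the same digest. If this local reasoning proves insufficient, I would instead state an explicit invariant over the authority store, \emph{every dispatch-authorized intent record for $t$ has a matching claimed record for the same $b$}, and prove it by induction over the ordered etcd transactions.
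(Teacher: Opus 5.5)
Your proposal is correct and follows the same argument as the paper, only traced backward from $d$ instead of forward through the gateway sequence. Both use the same ingredients: exact-binding checks before claim, the claim recording $b$, Proposition~1 for uniqueness of that claim, authorization rechecking the same digest, generation and nonce-bound attempt, the adapter carrying $t$ and $D(G)$ from the verified call into the native request, and trusted issuance supplying the proposal link, with your explicit appeals to collision resistance and to the store's transition structure for $c\prec a$ adding detail without changing the route.
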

\begin{proof}
Before claim, the gateway authenticates $x$, verifies $G$, and checks exact executable bindings. Claim records $b$ in shared root/nonce/attempt state. Proposition~1 makes that successful claim unique for $\rho$. After broker exchange, authorization requires the same digest, generation, nonce-bound attempt, current authority, and unadvanced intent. The gateway retains that verified call through final validity checks; the adapter maps it to a fixed native request carrying attempt ID $t$, digest $D(G)$, resource incarnation $r$, and fence. Trusted issuance supplies the proposal link. Thus the witnesses refer to one bound lineage with $c \prec a \prec d$. Gateway dispatch authorization precedes outbound request initiation; it establishes neither target acceptance, target commitment, nor benign business intent.
\end{proof}

The propositions constrain authority consumption and dispatch lineage, not the number of physical target invocations or committed effects. Suppressing repeated outbound invocations requires gateway execution discipline, while duplicate-effect prevention may additionally require target-side idempotency, fencing, or an atomic guard. These mechanisms must cover the relevant operation and retry behavior; a declared guard alone does not establish exactly-once effects.

The implemented gateway verifier validates executable grant bindings and signatures against enrolled keys; it does not reconstruct an admission proof or re-evaluate evidence requirements from opaque certificate digests. Proposition~2 establishes that compromised agent intent cannot dispatch unauthorized mutations through a governed path without matching independently issued authority and successful shared-state authorization. It does not prove that policy is sound, that evidence is truthful, or that prompt injection is eliminated: a permissive policy or compromised issuer can still authorize harmful actions, and bypass routes around the gateway fall outside the mediated path. The proposition establishes neither policy soundness, exactly-once mutation, recall of remote requests, nor conditions beyond the declared target guard.

\subsection{Target guards and completion}
A Resource Adapter declares one of three enforcement levels for a concrete action and guard scope. \textbf{Level A} atomically checks a declared guard with the protected mutation. \textbf{Level B} revalidates before dispatch but leaves a race between the check and the effect. \textbf{Level C} observes an effect after acceptance. A product cannot be assigned a level independently of its actual action path.

If every client able to modify the resource uses a correct atomic check-and-mutate operation, Level A excludes a mutation whose declared guard is false at the commit point. The reference mock checks resource version and an increasing fence under the same lock as its mutation. This establishes the local operation's scope; it does not establish a fleet-wide health condition or correctness of native database, Kubernetes, or cloud-service adapters.

Completion mode is separate. Dispatch-bound validity constrains initiation. Commit-bound validity additionally requires the target to enforce the relevant deadline or authority condition at commit. Cancellation or ledger revocation cannot retract a request already accepted by a remote target. The reference execution path supports dispatch-bound completion and refuses unsupported stronger profiles.

\subsection{Unknown outcomes retain consumption}
A lost response can mean no effect or an effect whose acknowledgment disappeared. The gateway reports Unknown and attempts to persist it without restoring allowance. Only \emph{persisted Unknown} enters the reconciliation API. A failed completion write or process crash can instead leave a durable \emph{Claimed} attempt: its allowance is consumed, but no completion status has been saved.

Five distinct crash intervals expose this gap: crashing \emph{after claim but before credential exchange} leaves the root consumed and durable intent recorded as claimed-not-dispatched, an intermediate state that does not constitute terminal no-dispatch evidence since a live executor could still advance it; crashing \emph{after exchange but before dispatch authorization} leaves a minted target credential before authorization is recorded in shared state; crashing \emph{after authorization but before target contact} leaves dispatch-authorized intent without confirming whether the outbound request was initiated or reached the target; crashing \emph{after target contact but before observing a validated outcome} leaves target mutation status unestablished; and crashing \emph{after outcome receipt but before durable completion recording} leaves the observed target result unpersisted in the attempt record. The reference implementation has no separate durable receipt journal; these are code-derived windows, not executed crash experiments.

The required recovery discipline follows six core rules: never automatically return consumed authority to the available pool, preserve durable attempt identity across crash and restart boundaries, distinguish ``no dispatch'' from ``no effect'', require authenticated target evidence for committed-effect conclusions before updating status from Unknown to Committed, account for late or delayed effects before reissuing fresh authority or initiating compensation, and avoid blind retries of non-idempotent operations.
The winning attempt is retained in shared state; a replay rejection does not recover its identity. Correlation requires attempt, root, grant digest, and a nonempty target reference; an expected resource value alone is insufficient. The reconciliation library changes Unknown to Committed only for a matching committed observation. Pending or other observations retain Unknown, and its scripted querier tests correlation and conservation only.

Queries never redispatch or replenish authority. Even terminal outcomes recording nondispatch or guard refusal retain consumption. Reissuance or compensation needs authoritative no-effect or committed-effect evidence as appropriate, protection against late effects, fresh admission, and conserved budget. Compensation is a separately authorized action. Stranded-claim repair, executor recovery barriers, and production authenticated target-query reconciliation remain future implementation work; the current implementation does not provide complete crash recovery.

\section{Reference Implementation and Evaluation}
\label{sec:evaluation}
The evaluation checks three properties of the implemented single-resource path: \emph{exact-request validation} rejects changed bindings, invalid authority, or unapproved parameters before consuming allowance; \emph{shared state} preserves single-use allowance under concurrent claims, retries, and uncertain outcomes; and \emph{execution authorization} remains distinct from policy correctness, since valid authority may permit a harmful action.
The retained experiments support the first two boundary properties and expose the third limit through a deliberately permissive policy. They are functional tests of specified mechanisms, rather than estimates of LLM attack-success rates or production performance.

\begin{table*}[t]
\centering\small
\setlength{\tabcolsep}{3.5pt}
\begin{tabularx}{\textwidth}{@{}Xrrrrrrrrl@{}}
\toprule
\textbf{Attacker-controlled case} & \textbf{GW} & \textbf{Probe} & \textbf{Auth.} & \textbf{Grant} & \textbf{Claim} & \textbf{Authz.} & \textbf{Target} & \textbf{Commit} & \textbf{Unknown}\\
\midrule
Injected runbook $\to$ unauthorized mutation & 1 & 0 & 0 & 0 & 0 & 0 & 0 & 0 & No\\
Forged approval in tool response & 1 & 0 & 0 & 0 & 0 & 0 & 0 & 0 & No\\
Changed resource after admission & 1 & 0 & 1 & 1 & 0 & 0 & 0 & 0 & No\\
Changed parameters after admission & 1 & 0 & 1 & 1 & 0 & 0 & 0 & 0 & No\\
Grant payload substitution & 1 & 0 & 1 & 1 & 0 & 0 & 0 & 0 & No\\
Other-audience signed envelope replay & 1 & 0 & 1 & 1 & 0 & 0 & 0 & 0 & No\\
No-grant direct gateway request & 1 & 0 & 0 & 0 & 0 & 0 & 0 & 0 & No\\
Invalid-grant direct gateway request & 1 & 0 & 0 & 0 & 0 & 0 & 0 & 0 & No\\
Concurrent duplicate gateway calls & 32 & 0 & 1 & 1 & 1 & 1 & 1 & 1 & No\\
Lost response, then second induced call & 2 & 0 & 1 & 1 & 1 & 1 & 1 & 1 & Yes\\
Undesirable action, permissive policy & 1 & 0 & 1 & 1 & 1 & 1 & 1 & 1 & No\\
Local direct-target bypass probes & 0 & 3 & 0 & 0 & 0 & 0 & 0 & 0 & No\\
\bottomrule
\end{tabularx}
\caption{Twelve deterministic boundary tests. GW counts gateway calls; Probe counts direct-target HTTP/TLS attempts. Auth. is independent test-policy authorization; Grant is signed issuance; Claim is successful root consumption; Authz. is successful dispatch authorization. Target counts observed mock requests passing its bearer check before guard/mutation; Commit counts mutations. Unknown is gateway-visible uncertainty, including an observed mock commit with a lost response. Rows use fresh authority namespaces/targets. Scripted redirection is assumed, not a measured LLM attack rate.}
\label{tab:adversarial}
\end{table*}

\subsection{Implementation scope and experimental setup}
The reference implementation is written in Rust and exposes the execution path to agent systems through an MCP-facing gateway. Agents retain their planning and workflow logic and submit tool calls with signed grants; the gateway handles exact-request checks, shared claims, credential exchange, adapter dispatch, and outcome retention. Client and gateway negotiate inline or reference grant transmission. The official MCP client integration exercises discovery, tool listing, and protected invocation. The evaluated resource adapter maps calls to a fixed fenced mock target; production target adapters and controller services remain future work (Table~\ref{tab:interfaces}).

The adversarial harness connects a minimal independent authorization decision to signing, issuance, and an in-process gateway call. It uses synthetic subject authentication and workload issuance, with actual outbound mTLS to the broker and target. Other suites exercise cryptographic caller authentication, SDK compatibility, and transport boundaries separately.

The public artifact preserves the evaluated reference runtime and its dependency lockfiles. Snapshot hashes are checked against the upstream source revision; exact source and dependency versions are recorded in the artifact. Historical Go proxy or admission results are excluded. A Go harness launches the official embedded etcd test cluster.

The evaluation runs on one ARM64 macOS workstation. Three etcd members use distinct local endpoints for client requests and Raft consensus, along with separate data directories. Cargo builds and tests the reference runtime offline with locked dependencies; the test configuration requires real etcd and the official MCP client SDK, so missing dependencies fail instead of silently skipping tests. All \textbf{58 implementation tests pass}, with no failures, ignored tests, or detected skips: 52 upstream tests, five paper safety tests, and one test containing 12 adversarial scenarios.

The instrumented chain counts authorization decisions, signed grants, successful claims, dispatch authorizations, credentialed target requests, mock commitments, and gateway-visible uncertainty separately. A claim consumes allowance before target contact. Dispatch-authorization and target-request counts coincide in these scenarios but are measured independently; other suites receive no inferred stage counts.

\subsection{Concurrency, conservation, and recovery}
The root-consumption tests check whether concurrent clients can redeem one allowance twice. The direct-ledger matrix synchronizes 32 or 128 contenders for 1, 8, or 32 fresh roots, totaling \textbf{6,560 claim attempts, 82 successful claims, 6,478 consumed-nonce refusals, and zero transport errors}. Exactly one wins per root. This matrix has no target dispatches and measures neither committed operations nor throughput.

Issuance/replacement tests check whether authority updates add allowance: 32 issuers race for one root (one issuance and 31 refusals), a chain retires 32 unclaimed generations, and terminal-state tests preserve consumption after commitment, uncertainty, known nondispatch, and guard refusal. Reconciliation tests check whether correlated observations can refine an outcome without creating authority: a fresh client reconstructs Unknown, refuses four wrong correlation fields, keeps a pending observation unknown, and accepts a matching committed observation without reopening authority. Client recreation is not an etcd crash, quorum-loss, or restore test.

Existing full-path code combines real etcd, TLS broker and target endpoints, synthetic subject/actor issuance, and the fenced mock. It exercises a committed effect, stale guard refusal, and a committed effect followed by a failed response. Other suites reject signed-envelope, identity, parameter, endpoint, descriptor and receipt substitutions. The official SDK test exercises discovery, tool listing, remediation, fixture execution and replay refusal with explicit synthetic authentication/broker boundaries.

\subsection{Deterministic adversarial boundary tests}
Table~\ref{tab:adversarial} relates each scenario to an execution-contract boundary: missing or forged authority tests independent authorization; changed resource or parameters and other-audience envelopes test exact binding (distinct from single-use replay); concurrent redemption and lost-response replay test authority conservation. Permissive policy separates valid authority from desirable effects, and bypass probes test selected credential routes.

Synthetic runbook text orders the agent to ignore maintenance and drain a protected node; a synthetic tool response asserts approval and a forged administrator override. The compromised-agent fixture emits the consequential request deterministically. A separate test-policy function permits the registered resource only with a trusted-review flag or a deliberately permissive policy; it never interprets tool text as approval. This is a minimal policy harness, not a shipped admission service or model experiment.

Table~\ref{tab:adversarial} reports the measured events across the three categories of boundary tests.
In evaluating \emph{refusal before consumption}, admission in the first two cases refuses the request and issues no grant; subsequent ungranted gateway calls request evidence without claiming authority. For post-admission substitution, legitimate authority is issued but changed parameters fail exact binding. Modified signature payloads and other-audience envelopes are also refused before claim, and direct calls without authority produce no effects.

Under \emph{conservation under contention and uncertainty}, the 32-call gateway race produces one claim, broker exchange, dispatch, commit, and successful response, accompanied by 31 refusals of consumed authority. In the lost-response case, the mock commits once then returns HTTP~503; the gateway retains Unknown, and the second call with a changed request identifier is refused. While the mock observer knows a commit occurred, the gateway cannot establish that outcome from the response. Preserving consumption prevents this gap from becoming new authority, without premature reconciliation.

\begingroup
\widowpenalty=10000
\clubpenalty=10000
The experiments also expose \emph{policy and mediation limits}. Under deliberately permissive policy, an undesirable drain obtains valid authority and commits: exact binding cannot repair a policy that authorizes the wrong action. For local bypass probes, a caller possessing the test mTLS identity receives HTTP~401 without broker credentials; a client without a certificate fails TLS. The target checks a separate broker bearer credential before mutation. These are actual restrictions on tested routes, but fixture code knows the credential; they do not establish operating-system credential isolation, coverage of every modifying client, or protection if that token is stolen.

\begin{samepage}
The evaluation strictly maintains the distinction between protocol records, ledger claims, dispatch authorizations, credentialed target requests, and committed effects. We do not derive throughput, availability, LLM attack-success rates, or production-scale guarantees from these experiments. The public artifact includes raw logs, baseline results, and a machine-readable claim-to-evidence audit; uninstrumented stages remain unmeasured.
\par\end{samepage}
\par\endgroup

\section{Limitations and Related Work}
\label{sec:related}
\subsection{Deployment and evidence limits}
Qualifying a deployment requires identifying every client able to cause each protected effect, the descriptor owner, accepted evidence, issuance scope, credential holder, target-side enforcement operations, and outcome correlation rules. Each added adapter needs checks for alternate routes, request matching, races, guards, and uncertain outcomes. Registry declarations alone do not prove coverage.

Table~\ref{tab:limits} consolidates the principal limits. New dispatch depends on the authority store and broker being available; refusal is an availability cost of the boundary. An already dispatched unknown effect may remain unresolved indefinitely without authoritative target evidence. Revocation prevents dispatch authorization ordered after its activation, but cannot retract an earlier authorization or an accepted remote request. Retention and restoration procedures must preserve consumed authority, not restore an older allowance snapshot.

\begin{table}[H]
\centering\footnotesize
\renewcommand{\arraystretch}{0.86}
\setlength{\tabcolsep}{2pt}
\begin{tabularx}{\columnwidth}{@{}p{1.75cm}X@{}}
\toprule
\textbf{Boundary} & \textbf{Scope or untested assumption}\\
\midrule
Agent attacks & Redirection assumed; no evaluated model or attack corpus.\\
Admission & Minimal fixture; production policy/evidence/issuer assumed.\\
Mediation & Three probes; OS custody and writer coverage untested.\\
Distributed & One host; no crash, quorum-loss, or cross-cluster tests.\\
Target effects & Fenced mock; no native adapter or physical exactly-once.\\
Recovery & Library query fixture; no target status service or crash journal.\\
Protocol & Candidate slice; independent peers and full transport untested.\\
Scope & No adoption, throughput or comparative performance metrics.\\
\bottomrule
\end{tabularx}
\caption{Limits of conditional guarantees and executed evidence.}
\label{tab:limits}
\end{table}

\subsection{Existing infrastructure patterns}
OpenFlow supplies the controller--enforcement pattern~\cite{openflow151}. The National Institute of Standards and Technology's zero-trust architecture separates policy decision and enforcement~\cite{nist800207}; Cedar evaluates requests~\cite{cedarauth2026}; OAuth token exchange (RFC~8693)~\cite{rfc8693} and DPoP (RFC~9449)~\cite{rfc9449} isolate and bind credentials. Capability authorization and least-privilege credential isolation are established mechanisms. KedgeFlow contributes their explicit systems composition, uniting frozen action identity, admission supported by evidence, exact-action grants, single-use allowance, coordinated claim and dispatch authorization, declared adapter guarantees, and retained outcome records under uncertainty. Systems with sufficiently capable policy decision and enforcement components can realize these contracts; no mechanism is claimed as an isolated invention.

Distributed authority and transactional systems provide foundational primitives: stores like etcd supply atomic state transitions~\cite{etcdguarantees2026}, while resource-side conditional writes protect declared target conditions. Separate commit domains do not become one distributed transaction by composition: an etcd transaction orders authority metadata without committing remote target mutations. Target idempotency suppresses duplicate effects given stable operation identity. Durable workflows preserve progress, while activity retries require explicit treatment of non-idempotent side effects~\cite{temporalactivities2026}. KedgeFlow's distinction is the specified ownership and interface composition across runtimes, admission systems, authority services, gateways, and heterogeneous targets.

Enterprise effects require varied target conditions, completion modes, and no-effect evidence beyond a uniform OpenFlow match/action table. ``Open'' denotes public candidate contracts, not an adopted standard or demonstrated interoperability. Future comparisons should provide equivalent policy, evidence, credential isolation, guards, and reconciliation before measuring integration costs; functional equivalence is a valid result.

\subsection{Agent action governance}
Greshake et al. show how retrieved content can redirect LLM applications via indirect prompt injection~\cite{greshake2023}. Input filtering and guardrails address that channel, while OWASP recommends tool permission checks, parameter validation, and least privilege outside the model~\cite{owaspprompt,owasptop10}. Action governance includes policy decisions, action identity, approval binding, and outcome evidence. KedgeFlow focuses on composing those decisions with conserved execution authority and distributed resource semantics, binding exact approved parameters and declared guarantees across concurrency, retries, stale authority, and uncertain outcomes.

Agent Control Protocol studies temporal admission~\cite{fernandezacp2026}; Proof-Carrying Agent Actions establishes runtime-neutral certificates and outcome closure~\cite{pcaa2026}; CAVA ensures canonical identity and receipt integrity~\cite{cava2026}; and OpenAgentFlow normalizes session-aware action governance across agent fleets~\cite{openagentflow2026}. While these systems address action binding and admission, KedgeFlow contributes the connecting controller--gateway authority contract and conserved resource semantics under concurrency and uncertainty without making comparative performance claims.

Related line-of-work papers address evidence obligations~\cite{he2026cac}, assurance-aware scheduling~\cite{he2026aas}, commit conditions~\cite{he2026tct}, and certificate-bound authority~\cite{he2026seb}; these can occupy KedgeFlow roles but are not required as a set. The longer KedgeFlow whitepaper provides extended architectural reference, while the contract and conditional proofs presented here are self-contained.

\section{Conclusion}
\label{sec:conclusion}
KedgeFlow defines a programmable controller--gateway contract for independently authorized, consequential actions in enterprise distributed systems. It separates agent reasoning from mutation authority, specifying how authority is bound, consumed, enforced, and reconciled. Two conditional results connect single-use authority and governed dispatch to explicit shared-state and issuance assumptions. The reference implementation exposes the execution path through MCP for integration with agent systems. Its evaluation demonstrates refusal of altered requests and retained consumption under concurrent calls and lost responses across 58 tests (12 deterministic adversarial scenarios); permissive policy exposes the limit of execution authorization. Complete mediation, sound admission, native recovery, and interoperability between independent implementations remain outside the demonstrated guarantees.

\par\noindent\textbf{Artifact availability.} The reproducibility artifact is publicly available at \url{https://github.com/openkedge/kedgeflow} (under \texttt{artifacts/short-paper} at commit \texttt{769c0a8}), containing the pinned runtime, test suite, fixtures, lockfiles, logs, manifests, and reproduction instructions.

\par\noindent\textbf{AI-use disclosure.} OpenAI Codex assisted with revision, tests, and simulated reviews; authors remain responsible for all text, code, and reported results.

\sloppy
\interlinepenalty=10000 
{\small\let\oldthebibliography\thebibliography\renewcommand\thebibliography[1]{\oldthebibliography{#1}\setlength{\itemsep}{0pt}\setlength{\parskip}{0pt}}\bibliographystyle{unsrt}\bibliography{refs}}
\end{document}